\documentclass{article}

\usepackage{fixltx2e,fix-cm}

\usepackage{amssymb}
\usepackage{amsmath}
\usepackage{amsthm}[ntheorem]
\usepackage{graphicx,subcaption}
\usepackage[square]{natbib}
\usepackage{makeidx}

\usepackage{thmtools}
\usepackage{stmaryrd,amsthm,url}
\usepackage{enumerate,enumitem}
\usepackage{etoolbox,mathtools}

\newtheoremstyle{theoremdd}
  {\topsep}
  {\topsep}
  {\upshape}
  {0pt}
  {\bfseries}
  {}
  { }
  {\thmname{#1}\thmnumber{ #2}\thmnote{ (#3)}}

\theoremstyle{theoremdd}

\newtheorem{prob}{Problem}

\declaretheorem[name=Example,qed={\lower-0.3ex\hbox{~\tiny$\blacksquare$}}]{ex}
\declaretheorem[name=Definition,qed={\lower-0.3ex\hbox{~\tiny$\blacksquare$}}]{defn}
\declaretheorem[name=Remark,qed={\lower-0.3ex\hbox{~\tiny$\blacksquare$}}]{remark}
\declaretheorem[name=Theorem,qed={\lower-0.3ex\hbox{~\tiny$\blacksquare$}}]{theorem}
\declaretheorem[name=Corollary,qed={\lower-0.3ex\hbox{~\tiny$\blacksquare$}}]{cor}
\declaretheorem[name=Lemma,qed={\lower-0.3ex\hbox{~\tiny$\blacksquare$}}]{lemma}

\numberwithin{remark}{section}
\numberwithin{theorem}{section}
\numberwithin{lemma}{section}
\numberwithin{equation}{section}
\numberwithin{defn}{section}
\numberwithin{figure}{section}
\numberwithin{table}{section}
\numberwithin{ex}{section}
\numberwithin{alg}{section}
\numberwithin{prob}{section}
\numberwithin{cor}{section}

\newcommand{\bals}{\begin{align*}}

\newcommand{\eals}{\end{align*}}

\newcommand{\msppp}{\hspace{15pt}}

\newcommand{\absb}[1]{\ensuremath{\lvert#1\rvert}}

\newcommand{\ben}{\begin{enumerate}}
\newcommand{\een}{\end{enumerate}}
\newcommand{\benr}{\begin{enumerate}[label=(\roman{*}),noitemsep,leftmargin=*]}
\newcommand{\benal}{\begin{enumerate}[label=(\alph{*}),noitemsep,leftmargin=*]}
\newcommand{\benar}{\begin{enumerate}[label=(\arabic{*}),noitemsep,leftmargin=*]}

\newcommand{\benrnls}{\begin{enumerate}[label=(\roman{*}),nolistsep,leftmargin=*]}
\newcommand{\benalnls}{\begin{enumerate}[label=(\alph{*}),nolistsep,leftmargin=*]}
\newcommand{\benarnls}{\begin{enumerate}[label=(\arabic{*}),nolistsep,leftmargin=*]}

\newcommand{\bpr}{\begin{prob}}
\newcommand{\epr}{\end{prob}}

\newcommand{\bdm}{\begin{displaymath}}
\newcommand{\edm}{\end{displaymath}}

\newcommand{\beq}{\begin{equation}}
\newcommand{\eeq}{\end{equation}}

\newcommand{\bea}{\begin{eqnarray}}
\newcommand{\eea}{\end{eqnarray}}

\newcommand{\beas}{\begin{eqnarray*}}
\newcommand{\eeas}{\end{eqnarray*}}

\newcommand{\bdf}{\begin{defn}}
\newcommand{\edf}{\end{defn}}

\newcommand{\bex}{\begin{ex}}
\newcommand{\eex}{\end{ex}}

\newcommand{\bexa}{\begin{ex}}
\newcommand{\eexa}{\end{ex}}

\newcommand{\bexe}{\begin{exercise}}
\newcommand{\eexe}{\end{exercise}}

\newcommand{\bthm}{\begin{theorem}}
\newcommand{\ethm}{\end{theorem}}

\newcommand{\bmat}{\begin{bmatrix}}
\newcommand{\emat}{\end{bmatrix}}

\newcommand{\bproof}{\begin{proof}}
\newcommand{\eproof}{\end{proof}}

\newcommand{\blem}{\begin{lemma}}
\newcommand{\elem}{\end{lemma}}

\newcommand{\brem}{\begin{remark}}
\newcommand{\erem}{\end{remark}}

\newcommand{\bcor}{\begin{cor}}
\newcommand{\ecor}{\end{cor}}

\newcommand{\balg}{\begin{algorithm}}
\newcommand{\ealg}{\end{algorithm}}

\def\ng{%
  \setbox0=\hbox{-}%
  \vcenter{%
    \hrule width\wd0 height \the\fontdimen8\textfont3%
  }%
}

\newcommand{\calb}{{\cal B}}

\newcommand{\eps}{{\epsilon}}

\newcommand{\bcb}{\begin{color}{blue}}
\newcommand{\bcr}{\begin{color}{red}}
\newcommand{\bcg}{\begin{color}{green}}
\newcommand{\ec}{\end{color}}

\newcommand{\pr}{{\prime}}

\newcommand{\htheta}{{\hat{\theta}}}

\newcommand{\eqnarrayff}[4]{
	\left\{ \begin{array}{lcl} 
			#1 & ; & #2 \\
			#3 & ; & #4 \end{array}\right.
}

\newcommand{\es}[2]{E_{#2}[#1]}

\newcommand{\bsp}{\begin{sloppypar}}
\newcommand{\esp}{\end{sloppypar}}

\begin{document}

\title{Revisiting the Lost Submarine Problem: A Decision Theoretic Approach}


\author{Anthony Almudevar \\
Department of Biostatistics and Computational Biology \\ University of Rochester}

\maketitle

\noindent\textbf{Abstract.}
This article includes a discussion of the ``lost submarine problem", following Morey \emph{et al} (2016). As the title of that paper suggests (\emph{The fallacy of placing confidence in confidence intervals}), the example is intended to illustrate the futility of relying on the confidence interval as a formal inference statement.  In the view of this author,  the misgivings expressed in Morey \emph{et al} (2016) can be resolved using a decision theoretic approach. While it is true that a variety of statistical methods lead to a variety of confidence intervals, once we precisely define their purpose, a single optimal choice emerges. Furthermore, distinct purposes lead to distinct optimal choices. Therefore, that a variety of procedures exist is an advantage rather than a liability. \\

\noindent\textbf{Keywords.}
62C05 General considerations in statistical decision theory; 62A01 Foundations and philosophical topics in statistics


\section{Introduction}

The following problem has been widely used in the literature to test alternative principles used in the  construction of confidence intervals, as well as the principle of the confidence interval itself (\cite{morey2016}, see also \cite{welch1939}).  

\begin{ex}[The Lost Submarine Problem]\label{ex.submarine} A submarine of length $2K$ has lost contact with its support vessel. It is assumed to be stationary. A hatch is located exactly in the center. The success of a rescue attempt depends on an accurate estimation of the hatch position. 

The crew of the support vessel notice two distinct bubbles which have emerged from the submarine. This allows the crew to confine its rescue attempt to the line intersecting the position of the bubbles, which can therefore be represented by coordinates $x_1, x_2$ in the interval $[\theta - K, \theta +K]$ on that line, where $\theta$ is the position of the hatch. Furthermore, since a bubble is equally likely to emerge from any place along the length of the submarine, $x_1, x_2$ form an independent sample from a uniform distribution  on $[\theta - K, \theta +K]$. 

How can the crew best make use of this statistical model in any rescue attempt?
\end{ex}

In one sense the answer to this question is trivial. We are able to confine any rescue attempt to the line intersecting the surface positions of the bubbles.   Furthermore, we may restrict the search to $\theta \in  [\max x_i - K , \min x_i + K]$. Then suppose the rescuers have one attempt (a constraint often imposed in this problem). The strategy is to estimate $\theta$ using some  estimator $\hat{\theta} = \hat{\theta}(x_1,x_2)$, assume this is the true location, then proceed accordingly. The closer $\hat{\theta}$ is to $\theta$, the higher the probability of success. Unless there is a specific reason to assign a nonuniform prior on $\theta$, the best choice of $\hat{\theta}$ is clear. It is the unique location equivariant estimator which minimizes risk based on any symmetric loss function, which would be $\hat{\theta} = (x_1+x_2)/2$.   We then define $V = x_2 - x_1$, which is an ancillary statistic (that is, its distribution does not depend on $\theta$). 

At this point, as the problem is presented, the crew requests of a statistician an estimate of the precision of  the estimate $\hat{\theta}$, which is conventionally presented either as a confidence interval or a Bayesian credible interval $[L, U]$. Whatever form the interval takes, it is constrained to have confidence level $P_\theta(\theta \in [L,U]) = 1 - \alpha$. In \cite{morey2016} four such intervals are presented, based on four distinct principles.  To make them comparable they are constrained to have confidence level $1 - \alpha = 0.5$. Needless to say, they have very different properties, but are derived using conventional statistical principles \cite{casella2024}. These procedures will be reviewed below, but first we will describe the distributional properties of the problem.

\section{Distributional Properties}

For convenience assume $K = 1$ in Example  \ref{ex.submarine}. We are given independent observations $X_1,X_2 \sim unif(\theta - 1, \theta+1)$. The joint density is
$$
f_X(x_1,x_2) = \frac{1}{4} I\{  x_1 \in (\theta - 1, \theta+1)\} I\{  x_2 \in (\theta - 1, \theta+1)\}.
$$
Define transformation $M = (X_1 + X_2)/2$, $V = X_2 - X_1$. The determinant of the Jacobian matrix is 1. The joint density is then
\beas
 f_{M,V}(m,v) &=&  \frac{1}{4} I\{  m - v/2 \in (\theta - 1, \theta+1)\} I\{  m+v/2  \in (\theta - 1, \theta+1)\} \\
 & = & \frac{1}{4} I\{  \theta \in (m - (1-v/2), m + (1-v/2)) \}. 
\eeas
The marginal density of $V$ is 
$$
f_V(v) = \frac{1}{4} (2 - \absb{v}) I\{ v \in (-2,2) \}
$$
and the conditional density of $M \mid \{ V = v \} $ is therefore
$$
f_{M \mid V}(m \mid v) =  \frac{1}{ (2 - \absb{v}) } I\{  \theta \in (m - (1-v/2), m + (1-v/2)) \},  
$$
equivalently,   $M \mid \{ V = v \}  \sim unif(\theta  - (1-v/2),  \theta + (1-v/2))$. 

We then note that $\hat{\theta} = M$ is a minimum risk estimator (MRE) of $\theta$ under any symmetric convex loss $L(\hat{\theta}  - \theta)$, and reasonably represents a best single estimate of $\theta$ \cite{almudevar2021}.  The remaining information is contained in $V$, which is ancillary, and serves as an index of the precision of $\hat{\theta}$ through the conditional density  $f_{M \mid V}$. We necessarily have $(1-\absb{V}/2) \geq 0$, and we also know that  $\theta \in (\hat{\theta} - (1-\absb{V}/2), \hat{\theta} + (1-\absb{V}/2))$, and that  $\hat{\theta}$ is uniformly distributed within that interval.  Any statistical inference would follow from this. 

Obviously, $(M,V)$ is a 1-1 transformation of the complete observation, and so will be completely informative, but the preceding comments apply to any sample size $n$ (the number of observed bubble streams). The minimal sufficient statistic in this case is $(X_{(1)}, X_{(n)})$ and the MRE estimate $\hat{\theta} = (X_{(1)} + X_{(n)})/2$ and ancillary statistic  $V = X_{(n)} - X_{(1)}$ would play exactly the same roles \cite{almudevar2021}.  
 
\section{A Compendium of Confidence Intervals}\label{sec.submarine.ci.compendium}

If the objective is then to construct a level $1-\alpha$ confidence interval $(L,U)$ for $\theta$, there are a variety of approaches which can be taken. The procedures are summarized in Table \ref{table.submarine.ci.compendium}.

\subsection{Sampling Distribution (SD) of the Mean}

The distribution of $\hat{\theta} - \theta$ does not depend on $\theta$, and is symmetric triangular on the interval $[-1,1]$. The $q$ critical value $q \leq 0.5$ of $z = \hat{\theta} - \theta$  is $z_q = 1 - \sqrt{2q}$. Therefore, a level $1 - \alpha$ confidence interval is given by 
\beq
\hat{\theta} \pm 1 - \sqrt{\alpha}.
\eeq

\subsection{Nonparametric (NP) Procedure}

If $x_1 - \theta$ and $x_2 - \theta$ have opposite sign, then $\theta \in  [\min x_i, \max_i]$. This event clearly has probability 1/2. It is easily shown that  this confidence interval may be expressed
\beq
\hat{\theta} \pm \absb{V}/2,
\eeq
which is therefore a level $1 - \alpha = 1/2$ confidence interval.   Suppose we define a class of confidence intervals
\beq
\hat{\theta} \pm k \absb{V}.
\eeq
We may partition the event $A = \{ \theta \notin [\hat{\theta} - k\absb{V}, \hat{\theta} + k \absb{V}]$ into 
\beas
A_1 &=& \{  \hat{\theta} + k \absb{V} < \theta \} \cap \{ x_2 > x_1 \} \\
A_2 &=& \{  \hat{\theta} + k \absb{V} < \theta \} \cap \{ x_2 \leq  x_1 \} \\
A_3 &=& \{  \hat{\theta} - k \absb{V} > \theta \} \cap \{ x_2 > x_1 \} \\
A_4 &=& \{  \hat{\theta} - k \absb{V} > \theta \} \cap \{ x_2 \leq x_1 \}.
\eeas
It is easily verified that all have equal probability. Furthermore, a geometric argument shows that as a subset of the support of $(x_1,x_2)$, $A_1$ is a triangle of base $B = 1/(1/2 + k)$ and height $H = 1$. It follows that for a confidence level $1-\alpha$ we set
$$
k = k_\alpha = \frac{1}{2} \left( \frac{1-\alpha}{\alpha} \right).
$$
  
\subsection{Uniformly Most Powerful (UMP) Test}

The most powerful test (via the Neyman-Pearson Lemma) of $H_o : \theta  = \theta_0$ against   $H_a: \theta  = \theta_1$, $\theta_1 > \theta_0$, accepts $H_0$ for $\min x_i \leq \theta_0 + c$, $\max x_i \leq \theta_0 + 1$. This gives a lower bound of the form
$$
\theta \geq \hat{\theta} - \min\left( \absb{V}/2 + c,  (1-\absb{V}/2) \right).
$$
The upperbound has essentially the same form, giving confidence interval
$$
\hat{\theta} \pm \min\left( \absb{V}/2 + k_\alpha,  (1-\absb{V}/2) \right). 
$$
It may be shown that confidence $1-\alpha$ is attained with $k_\alpha = 1 - \sqrt{2\alpha}$. 

\subsection{Bayesian Credible (BC) Interval}

The Bayesian credible interval is the central $(1-\alpha)$ portion of the support of  the likelihood,  
\beq
\hat{\theta} \pm (1-\alpha)(1 - \absb{V}/2).
\eeq
which has confidence level $(1-\alpha)$. 

\begin{table}
\centering
\caption{Summary of confidence intervals of Section \ref{sec.submarine.ci.compendium}: 
(SD) Sampling Distribution; (NP)  Nonparametric Procedure; (UMP) UMP Test; (BC) Bayesian Credible Interval.
Critical values $k_\alpha$ yield a confidence level of $1-\alpha$.}\label{table.submarine.ci.compendium}
\begin{tabular}{l|l|l}\hline
Derivation & & \\ 
 Principle & $\hat{\theta} \pm b(V)$ & $1 - \alpha$ critical value $k_\alpha$ \\ \hline 
(SD) & $\hat{\theta} \pm k_\alpha$ & $k_\alpha = 1 - \sqrt{\alpha}$ \\
(NP) & $\hat{\theta} \pm k_\alpha\absb{V}$ & $k_\alpha = (1-\alpha)/(2\alpha)$ \\
(UMP) & $\hat{\theta} \pm \min\left( \absb{V}/2 + k_\alpha,  (1-\absb{V}/2) \right)$ & $k_\alpha = 1 - \sqrt{2\alpha}$ \\
(BC) & $\hat{\theta} \pm k_\alpha (1 - \absb{V}/2)$ & $k_\alpha = 1 - \alpha$
\end{tabular}
\end{table}

\section{What Properties Must the Confidence Interval Possess?}\label{sec.ci.submarine.properties}

Principles of statistical inference occasionally force a unique choice of procedure. Sometimes this class consists of a single procedure, and when it doesn't, an optimization method  may be available to select a procedure which is best according to a well defined and intuitive criterion. 

However, sometimes there is a real choice. Suppose $X_1, X_2$ are $iid$ Bernoulli random variables with mean $\theta$. The complete sufficient statistic for $\theta$ is $T = X_1 + X_2$,  so according to the sufficiency principle 
of inference any estimator of that estimand should be a function of $T$.  Consider three estimators $\htheta_1 = (X_1+X_2)/2$,  $\htheta_2 = \htheta_1/2 + 1/4$, $\htheta_3 = (3X_1+X_2)/4$. The sufficiency principle forces us to eliminate $\htheta_3$, but is satisfied by $\htheta_1$ and $\htheta_2$. The conventional choice is probably $\htheta_1$, which is unbiased. On the other hand, if we  consider mean squared risk $MSE(\theta; \htheta) = E_\theta[(\htheta - \theta)^2]$, we find that $\htheta_2$ possesses both smaller average risk over a uniform prior on $\theta$, and smaller minimax risk \cite{almudevar2021}.
 
So, how does this extend to a confidence interval? Note that a confidence interval is not a characterization of a specific  estimator.  They are often derived from the standard error of an estimator, but this is simply a special case, since the conventional definition makes no reference to any estimation problem. It is a statistic on its own, and is therefore associated with a decision problem which may be related to, but is not equivalent to, the estimation problem.   

So, as we did with estimation and hypothesis testing, we can first  describe properties that a confidence interval $(L,U)$ should possess.
  
\subsection{Location Equivariance}

The endpoints $L,U$ should be equivariant. This means the inference procedure should be invariant to any transformation $\theta + c$. This simply says that the definition of the origin of the axis on which $\theta$ is located is irrelevant. Equivalently, the prior density of $\theta$ is uniform.  It can be proven that we may select any equivariant statistic $\delta_0$, so that any other equivariant statistic may be written $\delta = \delta_0 + w$, where $w$ is a location invariant statistic. Here, the obvious choice is $\delta_0 = \hat{\theta}$.  Any invariant statistic is a function of $V$, which is a maximal invariant \cite{almudevar2021}. Therefore, we can write
$$
L = \hat{\theta} - b_L(V), \,\,\, U = \hat{\theta} + b_U(V).
$$
Note that for $n > 2$, the sufficiency principle would require that any statistic be a function only of  $(X_{(1)}, X_{(n)})$, and we could similarly argue that any invariant statistic must be a function of $V = X_{(n)} - X_{(1)}$.

\subsection{Symmetry} 

The search for optimal confidence intervals can be confined to symmetric ones, that is $\hat{\theta} \pm b(V)$. Consider the following theorem.   

\begin{theorem}\label{thm.min.ci.submarine}
Suppose $U \sim unif[\theta - L, \theta + L]$. For any $1 - \alpha \in [0,1]$, there exists a confidence interval $(U - c_1, U + c_2)$ with exact confidence level $1 - \alpha$. Among all such confidence intervals the minimum  width is $L(1 - \alpha)$ and is always attained by the symmetric confidence interval $U \pm  L(1 - \alpha)/2$. 
\end{theorem}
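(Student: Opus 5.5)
The plan is to reduce everything to a statement about the pivot $Z = U - \theta$, whose distribution does not depend on $\theta$. Since $\theta \in (U - c_1, U + c_2)$ if and only if $Z \in (-c_2, c_1)$, the coverage probability of the interval $(U - c_1, U + c_2)$ equals $P(Z \in (-c_2, c_1))$ for every $\theta$. The interval's width is $c_1 + c_2$, which is exactly the length of the set $(-c_2, c_1)$. So the theorem becomes a one-dimensional question: among intervals $(a,b)$ with $P(Z \in (a,b)) = 1-\alpha$, find the minimum length $b - a$.

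Because $Z$ is uniform with constant density $1/\ell$ on an interval $S$ of length $\ell$, we have $P(Z \in (a,b)) = \absb{(a,b) \cap S}/\ell$. For existence, take any subinterval of $S$ of length $(1-\alpha)\ell$; the endpoint cases $\alpha = 0$ and $\alpha = 1$ (the whole support, or a degenerate interval) are included. For the lower bound, note $b - a \geq \absb{(a,b) \cap S} = (1-\alpha)\ell$, so no interval of the required level can be shorter. Equality holds exactly when $(a,b) \subseteq \bar S$ up to endpoints. The centered choice $a = -b = -(1-\alpha)\ell/2$ lies inside $S$ because $S$ is symmetric about $0$. Translating back gives the symmetric interval $U \pm (1-\alpha)\ell/2$ of width $(1-\alpha)\ell$, and it always attains the minimum. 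The argument also shows the symmetric interval is not the unique minimizer: any subinterval of the support of length $(1-\alpha)\ell$ works. This is consistent with the claim that the minimum is ``always attained by'' the symmetric interval.

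The main obstacle is not the argument itself but matching the constants to the statement. With $U \sim unif[\theta - L, \theta + L]$ literally, the support has length $\ell = 2L$. The argument above then gives minimum width $2L(1-\alpha)$, attained by $U \pm L(1-\alpha)$. The stated width $L(1-\alpha)$ and interval $U \pm L(1-\alpha)/2$ correspond instead to support length $\ell = L$, i.e.\ $U \sim unif[\theta - L/2, \theta + L/2]$. I would therefore write the proof for a general support length $\ell$ and state explicitly which normalization of $L$ produces the constants in the theorem. For the later application to $M \mid V$, the support length is $\ell = 2 - \absb{V}$, so the conditional interval would be $\hat\theta \pm (1-\alpha)(1 - \absb{V}/2)$. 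This agrees with the (BC) row of Table~\ref{table.submarine.ci.compendium}, which suggests the factor-of-two convention is the only point needing care.
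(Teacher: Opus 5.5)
Your proof is correct and is essentially the paper's own argument. The paper also writes the coverage as the length of $[\theta - c_2, \theta + c_1] \cap [\theta - L, \theta + L]$ divided by $2L$, bounds this by $(c_1+c_2)/(2L)$, and observes that the symmetric interval attains the bound; your inequality $b - a \geq \absb{(a,b)\cap S}$ is a tidier form of the same step.

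Your factor-of-two diagnosis is also right. The paper's proof itself concludes that the minimum width is $2L(1-\alpha)$, attained by $U \pm L(1-\alpha)$, and this is the normalization used later to get $\hat{\theta} \pm (1-\alpha(v))(1-\absb{v}/2)$. The constants in the displayed statement are a slip.
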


\begin{proof}
Suppose the confidence interval $[U - c_1, U + c_2]$ has exact confidence level 
\begin{align*}
&P( U \in [\theta - c_2,  \theta + c_1] \cap [  \theta - L,  \theta + L])& \\  
&\msppp = P( U \in [\theta - \min(c_2,L),   \theta + \min(c_1,L) ] ) \\
&\msppp =  \frac{\min(c_1,L) + \min(c_2,L)}{2L} \\
&\msppp\leq  \frac{c_1 + c_2}{2L}.  
\end{align*}
Thus, the width is bounded below by $c_1 + c_2 \geq (1-\alpha) 2L$. It is easily verified that the symmetric  confidence interval $U \pm L(1-\alpha)$ has a confidence level $1-\alpha$ and attains the minimum possible width. 
\end{proof}

\subsection{Admissibility}

 With probability 1 we have $\theta \in (\hat{\theta} - (1-\absb{V}/2), \hat{\theta} + (1-\absb{V}/2))$. Therefore, a confidence interval $\hat{\theta} \pm b(V)$ should always satisfy  $b(V) \leq  (1-\absb{V}/2)$. A confidence interval which does not satisfy this inequality can be regarded as inadmissible. 
 
\begin{theorem}\label{thm.ci.submarine.admissible}
A confidence interval for which $P_\theta\{b(V) > (1-\absb{V}/2)\} > 0$ may be replaced   with a confidence interval $\hat{\theta} \pm b^\pr(V)$, with the same confidence level,  and for which 
\beas
b^\pr(v) &\leq& b(v), \\
P_\theta\{ b(V) \geq b^\pr(V) \} &>& 0. 
\eeas  
\end{theorem}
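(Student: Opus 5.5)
The plan is to truncate $b$ at the known bound. Define
$$
b^\pr(v) = \min\left( b(v), 1 - \absb{v}/2 \right).
$$
Then $b^\pr(v) \leq b(v)$ holds for every $v$ by construction. On the event $\{ b(V) > 1 - \absb{V}/2 \}$, which has positive probability by hypothesis, we have $b^\pr(V) < b(V)$. This gives the required strict improvement on a set of positive probability, so the second displayed condition holds, in fact with strict inequality. Since $V$ is ancillary, this probability does not depend on $\theta$.

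The main step is to show that the confidence level is unchanged. Condition on $V = v$. From Section 2, $\hat{\theta} \mid \{V = v\} \sim unif(\theta - (1-\absb{v}/2), \theta + (1-\absb{v}/2))$. Hence the event $\{\theta \in \hat{\theta} \pm c\}$ equals the event $\{\hat{\theta} \in \theta \pm c\}$. Its conditional probability is $\min(c, 1-\absb{v}/2)/(1-\absb{v}/2)$, which is the same computation as in the proof of Theorem \ref{thm.min.ci.submarine}.

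This conditional coverage depends on $c$ only through $\min(c, 1 - \absb{v}/2)$. Therefore replacing $c = b(v)$ by $b^\pr(v)$ leaves it unchanged for every $v$. Equivalently, since $\absb{\hat{\theta} - \theta} < 1 - \absb{V}/2$ almost surely, the events $\{\absb{\hat{\theta}-\theta} \leq b(V)\}$ and $\{\absb{\hat{\theta}-\theta} \leq b^\pr(V)\}$ coincide almost surely. Integrating over $f_V$ then shows that the unconditional confidence levels $P_\theta(\theta \in \hat{\theta} \pm b(V))$ and $P_\theta(\theta \in \hat{\theta} \pm b^\pr(V))$ are equal for every $\theta$.

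The only delicate point is the boundary behaviour: open versus closed intervals, and the null set where $\absb{\hat{\theta}-\theta} = 1 - \absb{V}/2$. These have probability zero, so they do not affect the argument. If $b$ is allowed to be negative or non-measurable, I would assume measurability and $b \geq 0$ implicitly. I would also remark that the width $2b^\pr(V)$ is strictly smaller on a set of positive probability, which is what justifies calling the original interval inadmissible.
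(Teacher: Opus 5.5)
Your proposal is correct and takes the same route as the paper, which also sets $b^\pr(V) = \min(b(V), 1-\absb{V}/2)$ and then appeals to the coverage computation in the proof of Theorem~\ref{thm.min.ci.submarine}. You carry out that step explicitly: conditional on $V=v$, the coverage depends on the half-width only through $\min(c, 1-\absb{v}/2)$, and you then integrate over $f_V$; you also rightly note that the second displayed condition holds with strict inequality.
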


\begin{proof} 
Set $b^\pr(V) = \min(b(V), (1-\absb{V}/2))$. The remaining argument follows Theorem  \ref{thm.min.ci.submarine}.
\end{proof}

\subsection{Comparing the Procedures}

If we examine the four candidate procedures (Table \ref{table.submarine.ci.compendium}), it is clear that two procedures (SD) and (NP) are inadmissible. However, since the admissibility bound  $(1-\absb{V}/2)$ is observable, it seems reasonable to replace any confidence interval $\hat{\theta} \pm b(V)$ with $\hat{\theta} \pm \min(b(V),(1-\absb{V}/2))$. In fact, by Theorem \ref{thm.ci.submarine.admissible} the critical values $k_\alpha$ listed in  Table \ref{table.submarine.ci.compendium} will still define an exact confidence level $1-\alpha$ for the modified procedure. 

Figure \ref{fig1-submarine} plots the bound  functions $b(v)$ for the four procedures, for confidence levels $1 - \alpha = $ 50\%, 75\%. Any portion of $b(v)$ exceeding the admissible bound $(1 - \absb{v})/2$ is colored gray. The UMP procedure coincides with the admissible bound for all large enough $\absb{v}$.  

Denote the bounds for the respective procedures $b_{SD}(v)$, $b_{NP}(v)$, $b_{UMP}(v)$, $b_{BC}(v)$. Clearly, when we consider inadmissible procedures, there is considerable variety of behavior. For example  $b_{NP}(v)$ increases with $\absb{v}$ while $b_{BC}(v)$ is decreases with $\absb{v}$.  On the other hand $b_{UMP}(v)$  possess a maximum in the interior $\absb{v} \in (0,2)$, while $b_{SD}(v)$ is constant. 

Application of the admissibility bound seems to impose at least some uniformity of behavior. For all procedures, $b(2) = 0$, indicating that the error of the estimate $\hat{\theta}$ approaches 0 as $\absb{v}$ approaches 2. However, now $b_{NP}(v)$ and $b_{UMP}(v)$ both possess a maximum in $\absb{v} \in (0,2)$, while $b_{SD}(v)$ remains constant for all small enough $\absb{v}$. In contrast, $b_{BC}(v)$ is the only bound function which remains strictly monotone.   Thus, a significant variety of choices among procedures remains.

\begin{figure}
\centering
\includegraphics[width=4.6in, height=2.8in, viewport=20 140 450 370, clip]{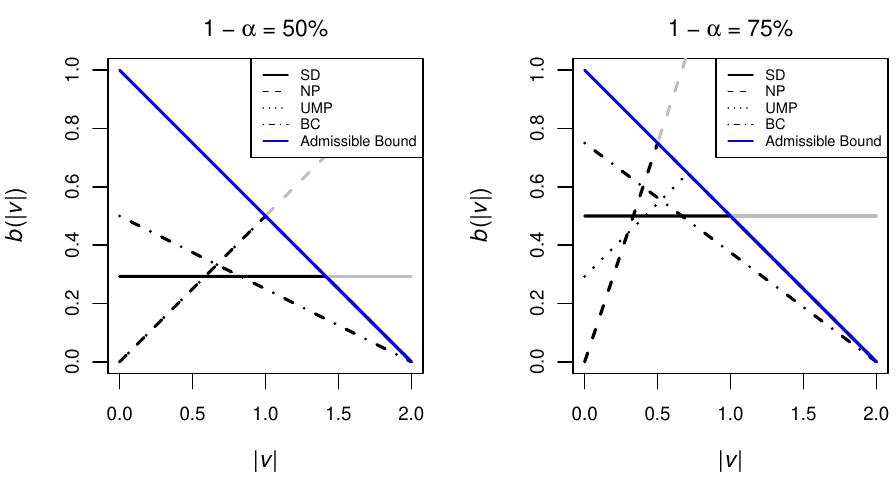}
\caption{\small Plots of confidence bounds $b(v)$ for procedures given in Table \ref{table.submarine.ci.compendium}, for confidence levels $1 - \alpha = $ 50\%, 75\%.  Gray lines indicate where $b(v)$ exceeds admissible bound $(1 - \absb{v})/2$. Note that for     $1 - \alpha = $ 50\%, the NP and UMP procedures coincide (Section \ref{sec.ci.submarine.properties}).}\label{fig1-submarine}
\end{figure}

\section{A Decision Theoretic Approach to the Selection of Confidence Intervals}\label{sec.decision.theoretic.ci}

At this point, we have enumerated four procedures satisfying the principles given in Section \ref{sec.submarine.ci.compendium}, once the admissibility bound is applied. However, the behavior of these procedures still varies considerably, so we still need to determine on which basis a choice of procedure can be made. 

Suppose we adopt a more explicit decision theoretic framework. We might start by asking how any inference will be used by the rescue crew of Example \ref{ex.submarine}. So far, we have an estimate $\hat{\theta}$ of $\theta$ which is in some sense optimal. We also have  a framework with which to assess the accuracy of the estimate, in the form of the conditional density $\hat{\theta} \mid V$.  If the crew makes a single rescue attempt, accepting $\hat{\theta}$ as the best estimate, then presumably the rescue attempt will succeed if $\hat{\theta}$ is within some error tolerence $\eps$ of $\theta$. Thus, $\hat{\theta}$ should be used because it minimizes the expected distance from $\theta$. But the action of the crew is in no way affected by the choice of confidence interval. More generally, any assessment of the precision of  $\hat{\theta}$ plays no role in any decision.

\subsection{Decision Theoretic Formulation}

Suppose we can more precisely describe a property of a confidence interval that can be optimized. This can be done in terms of a loss function $L_\theta(L,U)$ where $(L,U)$ is a confidence interval for $\theta$. The general approach will be as follows.  We accept the conclusion of Theorem \ref{thm.min.ci.submarine}, and consider only confidence intervals of the form $\hat{\theta} \pm b(v)$, taking as given that among two procedures with the came confidence level, the one with the smaller width is preferable. 

Define the conditional confidence level 
\beq
1 - \alpha(v) = P_\theta( \theta \in (L,U) \mid V = v).  \label{eq.alpha.v}
\eeq
If $(L,U)$ has confidence level $1-\alpha$, then
\beq
\int_{v \in (-2,2)}  (1 - \alpha(v)) f_V(v) dv = 1 - \alpha.  \label{eq.alpha.v.int}
\eeq
If $L_\theta$ is location invariant, then risk   
$$
R(\theta) = \es{L_\theta(L,U)}{\theta} 
$$ 
is independent of $\theta$. Thus, we can proceed by minimizing $\es{L_\theta(L,U) \mid V=v}{\theta}$ for each $v$. 
This is easily done, since conditional on $\{V=v\}$, by Theorem  \ref{thm.min.ci.submarine} the smallest interval with coverage $1 - \alpha(v)$ is $\hat{\theta} \pm (1 - \alpha(v))(1 - \absb{v}/2)$. Assuming loss is nondecreasing in confidence interval size, this means we can construct a minimum risk procedure by setting
$$
b(v) = (1 - \alpha(v))(1 - \absb{v}/2).
$$
The confidence level constraint is then given by applying the law of total probability 
$$
R(\theta) = \es{L_\theta(L,U)}{\theta} 
 = \es{\es{L_\theta(L,U) \mid V}{\theta}}{\theta},
$$
which is equivalent to 
\beas
1 - \alpha &=& \int_{v \in (-2,2)}  (1 - \alpha(v)) f_V(v) dv \\
&=& \int_{v \in (-2,2)}  \frac{b(v)}{(1-\absb{v}/2)} \frac{1}{4} (2 - \absb{v}) dv \\
&=& \frac{1}{2}\int_{v \in (-2,2)}  b(v) dv.
\eeas
The final step is to select the appropriate loss function.

\subsection{Minimum Search Effort}\label{sec.minimum.search.effort}

Suppose the strategy of the rescue crew is to search exhaustively for $\theta$ within some interval. If a level $1-\alpha$ confidence interval $\hat{\theta} \pm b(V)$ is selected for this purpose, then the rescue will succeed with probability $1-\alpha$. However, it may be important to limit the time taken by the search effort, which is assumed to be proportional to the length of the confidence interval.  Therefore, the confidence interval which minimizes the expected length for  a fixed confidence level is selected. This is deemed reasonable, since the expected length is minimized while holding the probability of a successful rescue fixed. 

This strategy is attained by adopting loss function $L_\theta = 2b(v)$. The following theorem gives the minimum risk choice for $b(v)$. 

\begin{theorem}\label{thm.submarine.loss.1}
Define $\calb$ to be the class of real valued functions on domain $[-2,2]$. Consider the problem of minimizing functional $\Gamma(b)$, $ b \in \calb$:
$$
\Gamma(b) =  \int_{v \in (-2,2)} b(v) f_V(v) dv 
$$
subject to
\bea
b(v) &\geq& 0, \label{thm.sub.2} \nonumber \\
b(v) &\leq& (1 - \absb{v}/2), \label{thm.sub.3} \nonumber  \\
\frac{1}{2} \int_{v \in (-2,2)} b(v) dv & = &1-\alpha.\label{thm.sub.4} 
\eea
Then $\Gamma(b)$ is minimized by 
 \beq
b(v) = \eqnarrayff{(1- \absb{v}/2)}{\absb{v} \geq k_\alpha}{0}{\absb{v} < k_\alpha} \label{eq.min.search.effort}
\eeq
 where $k_\alpha = 2(1-\sqrt{1-\alpha})$. 
\end{theorem}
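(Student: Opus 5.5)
The constraints and $\Gamma$ are all linear in $b$, so this is a linear programme over a box. I plan to solve it with a bathtub, or Lagrangian exchange, argument. The key observation is that the objective weights $b(v)$ by $f_V(v) = \tfrac14(2-\absb{v}) = \tfrac12(1-\absb{v}/2)$, which is strictly decreasing in $\absb{v}$. The equality constraint (\ref{thm.sub.4}), by contrast, weights $b(v)$ uniformly. To meet the fixed budget $\frac12\int b = 1-\alpha$ as cheaply as possible, one should place all of the allowed mass where $f_V$ is smallest, namely at large $\absb{v}$, up to the ceiling $1-\absb{v}/2$. This motivates the candidate (\ref{eq.min.search.effort}), which I will denote $b^*$.

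First I will check that $b^*$ is feasible and identify $k_\alpha$. The bounds $0 \le b^* \le 1-\absb{v}/2$ hold by construction. By symmetry, for $k\in[0,2]$,
\[
\frac12\int_{\absb{v}\ge k}(1-\absb{v}/2)\,dv=\int_k^2(1-v/2)\,dv=\frac{(2-k)^2}{4}.
\]
This quantity decreases continuously from $1$ to $0$ as $k$ runs over $[0,2]$. Setting it equal to $1-\alpha$ gives the unique solution $k=2-2\sqrt{1-\alpha}=k_\alpha$, which matches the statement.

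Next comes optimality. Let $b$ be any feasible function, which I take to be measurable and integrable so that the integrals make sense. Put $c=f_V(k_\alpha)=\tfrac14(2-k_\alpha)$. Both $b$ and $b^*$ satisfy (\ref{thm.sub.4}), so $\int(b-b^*)\,dv=0$, and therefore
\[
\Gamma(b)-\Gamma(b^*)=\int_{(-2,2)}\bigl(b(v)-b^*(v)\bigr)\bigl(f_V(v)-c\bigr)\,dv .
\]
I then show the integrand is pointwise nonnegative, splitting into two regions.
\begin{itemize}
\item On $\absb{v}<k_\alpha$ we have $b^*=0\le b$ and $f_V(v)>c$, so both factors are nonnegative.
\item On $\absb{v}\ge k_\alpha$ we have $b^*=1-\absb{v}/2\ge b$ and $f_V(v)\le c$, so both factors are nonpositive.
\end{itemize}
Hence $\Gamma(b)\ge\Gamma(b^*)$, which proves the theorem.

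There is no real obstacle here; the only delicate point is choosing the multiplier $c$ correctly, so that the sign change of $f_V-c$ occurs exactly at the threshold $k_\alpha$. The argument also yields near-uniqueness. Equality forces $b=b^*$ almost everywhere except on the null set $\absb{v}=k_\alpha$, because $f_V-c\neq 0$ off that set.
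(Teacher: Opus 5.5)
Your proof is correct, but it takes a different route from the paper.

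\textbf{The paper's route.} The paper argues by pairwise exchange. For $\absb{v_1}<\absb{v_2}$, shifting mass from $v_1$ to $v_2$ strictly lowers $\Gamma$, because $f_V$ is strictly decreasing in $\absb{v}$. Hence an optimal $b$ must have threshold form, and $k_\alpha$ is then fixed by the budget constraint. This is a necessary-condition argument. Its merit is that it derives the shape of the solution rather than guessing it. Its weaknesses are:
\begin{enumerate}[label=(\roman{*}),noitemsep,leftmargin=*]
\item it tacitly assumes a minimizer exists;
\item it is phrased as perturbations at single points, which do not change an integral;
\item it rests on an unexplained reduction to functions with one-sided derivatives;
\item as written, it only treats the case where both values lie strictly inside their ranges.
\end{enumerate}

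\textbf{Your route.} Your argument is a sufficiency (verification) proof in the style of the Neyman--Pearson lemma or the bathtub principle. Subtracting the multiplier $c=f_V(k_\alpha)$ times the vanishing quantity $\int(b-b^*)\,dv$ makes the integrand pointwise nonnegative. This proves global optimality of $b^*$ directly over every measurable feasible $b$. It needs no existence or regularity assumptions, and it gives a.e.\ uniqueness for free.

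\textbf{Trade-off.} The cost of your approach is that the candidate must be known in advance. Your exchange heuristic supplies it, and your computation $\frac12\int_{\absb{v}\ge k}(1-\absb{v}/2)\,dv=(2-k)^2/4$ correctly recovers $k_\alpha=2(1-\sqrt{1-\alpha})$.
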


\begin{proof}
Since $\Gamma(b)$ is an integral we may confine attention to $b \in \calb$ which possess one-sided derivatives everywhere.  Define the family of open intervals $I_v = (0, (1 - \absb{v}/2))$. Suppose for $0 \leq v_1 < v_2$ we have $b(v_1) \in I_{v_1}$, $b(v_2) \in  I_{v_2}$, Suppose $b(v_1) + b(v_2) = q$. Then, since $f_V(v)$ is strictly decreasing in $\absb{v}$, $\Gamma(b)$ can be strictly improved by increasing  $b(v_2)$ and decreasing $b(v_1)$ by equal amounts, leaving the constraints satisfied.  Therefore $b$ cannot be optimal unless $b(v_1) = (1 - \absb{v}/2)$ or $b(v_2) = 0$.  This is only possible if $b(v)$ equals \eqref{eq.min.search.effort}, with $k_\alpha$ selected to satisfy the constraints. 
\end{proof}

The solution offered by Theorem \ref{thm.submarine.loss.1}  (Figure \ref{fig2-submarine} (a)) is rather surprising, and is quite unlike any confidence interval obtained by conventional methods. It is, however, the optimal implementation of a strategy which is precisely stated and arguably reasonable. Thus, the merit of this solution is no longer a mathematical question. 

It can, however, be explained in terms of economic cost. The larger $\absb{v}$ is, the smaller is the region in which $\theta$ can be located with certainty. The meaning of Equations \eqref{eq.alpha.v} and  \eqref{eq.alpha.v.int} is that the probability of success $1-\alpha$ can be allocated in any way to a class of confidence intervals indexed by $v$. The solution offered by  Theorem \ref{thm.submarine.loss.1} simply allocates all of that probability to the shortest confidence intervals.  

If the minimum search effort solution of Theorem \ref{thm.submarine.loss.1}  is unsatisfactory, we can try taking a more refined approach to defining the decision problem, which we do in the next section.

\begin{figure}
\centering
\includegraphics[width=4.6in, height=2.8in, viewport=20 140 450 380, clip]{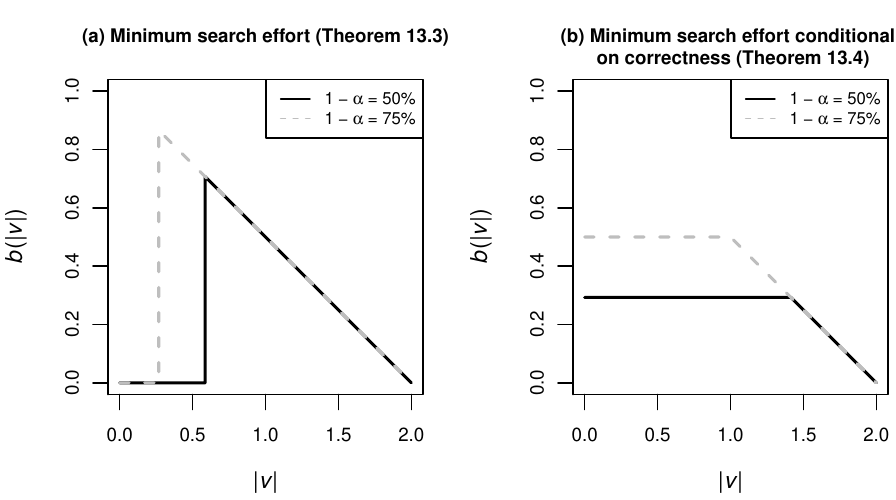}
\caption{\small Plots of confidence bounds $b(v)$ for procedures given in Theorem \ref{thm.submarine.loss.1} (Plot (a)) and Theorem \ref{thm.submarine.loss.2} (Plot (b)), for confidence levels $1 - \alpha = $ 50\%, 75\%.}\label{fig2-submarine}
\end{figure}

\subsection{Minimum Search Effort Condtional on Correctness of the Confidence Interval}

The dependence of this problem on the decision theoretic point of view can be illustrated further by considering a  modification of the rescuers' objectives. It could, for example, be argued that if the motive for a shorter confidence interval is the safety of those being rescued, then its length is only important if it contains $\theta$. In other words, we wish to minimize the expected length of the confidence interval conditional on its being correct. 

This is, in fact, a well defined problem. In this case the loss is, for some constant $K$, 
$$
L_\theta(L,U) = (U-L)I\{\theta \in (L,U)\} + K I\{\theta \notin (L,U)\}.
$$   
This loss conforms to indifference to the length of the  confidence interval when it does not contain $\theta$. The risk is then
$$
R_\theta = \es{(U-L)I\{\theta \in (L,U)\}}{\theta} + K \alpha.
$$   
But the problem is constrained by forcing $\alpha$ to be constant. This means the optimization problem is equivalent for all values of $K$, which may therefore be set to zero. We therefore define the loss function
$$
L_\theta(L,U) = (U-L)I\{\theta \in (L,U)\}.
$$   
Suppose the confidence interval is given by $\hat{\theta} \pm b(v)$. Then 
\beas
\es{L_\theta \mid V = v}{\theta} & = & \es{2b(v) I\{\theta \in (\hat{\theta} - b(v), \hat{\theta} + b(v))\} \mid V = v}{\theta} \\
&=& 2b(v) \es{I\{ \theta \in (\hat{\theta} - b(v), \hat{\theta} + b(v)) \} \mid V = v}{\theta} \\
& = & 2b(v) \frac{2b(v)}{2-\absb{v}} \\
& = & \frac{4b(v)^2}{2-\absb{v}}. 
\eeas
Similar to the previous procedure, we set 
 \beas
R_\theta &=& \int_{v \in (-2,2)}  \frac{4b(v)^2}{2-\absb{v}} f_V(v) dv \\
& = &   \int_{v \in (-2,2)}  b(v)^2 dv.
\eeas
The constraints are otherwise the same.

\begin{theorem}\label{thm.submarine.loss.2}
Define $\calb$ to be the class of real valued functions on domain $[-2,2]$. Consider the problem of minimizing functional $\Gamma(b)$, $ b \in \calb$:
$$
\Gamma(b) =  \int_{v \in (-2,2)} b(v)^2 dv 
$$
subject to
\bea
b(v) &\geq& 0, \label{thm.sub.2} \\
b(v) &\leq& (1 - \absb{v}/2), \label{thm.sub.3}  \\
\frac{1}{2} \int_{v \in (-2,2)} b(v) dv & = &1-\alpha.\label{thm.sub.4} 
\eea
Then $\Gamma(b)$ is minimized by  
\beq
b(v) = \eqnarrayff{(1- \absb{v}/2)}{\absb{v} \geq \sqrt{4\alpha}}{k_\alpha}{\absb{v} < \sqrt{4\alpha}} \label{eq.min.odds}
\eeq
where $k_\alpha =   1 - \sqrt{\alpha}$.
\end{theorem}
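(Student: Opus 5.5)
The plan is to treat this as a convex program and verify a water-filling solution directly, without Lagrangian machinery. Write $u(v) = 1 - \absb{v}/2$ for the admissibility bound. The objective $\int b^2$ is strictly convex, the constraints are a box $0 \le b(v) \le u(v)$ plus one linear equality, so the minimizer should have the form
$$
b^*(v) = \min(\lambda, u(v))
$$
for a constant $\lambda \in [0,1]$. This is what pointwise minimization of the Lagrangian integrand $b^2 - 2\lambda b$ over $[0,u(v)]$ produces. Since $u(v) \ge \lambda$ iff $\absb{v} \le 2(1-\lambda)$, we get $b^* = \lambda$ for $\absb{v} < 2(1-\lambda)$ and $b^* = u$ otherwise. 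This matches \eqref{eq.min.odds} once we check $\lambda = k_\alpha = 1-\sqrt{\alpha}$, since then $2(1-\lambda) = \sqrt{4\alpha}$.

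The first step is to pin down $\lambda$ from the coverage constraint \eqref{thm.sub.4}. By symmetry,
$$
\frac{1}{2}\int_{-2}^{2} b^*(v)\,dv = 2\lambda(1-\lambda) + \int_{2(1-\lambda)}^{2} (1 - v/2)\,dv = 2\lambda(1-\lambda) + \lambda^2 = 1-(1-\lambda)^2 .
$$
Setting this equal to $1-\alpha$ gives $1-\lambda = \sqrt{\alpha}$, i.e.\ $\lambda = 1-\sqrt{\alpha}$. This $\lambda$ lies in $[0,1]$ for every $\alpha \in [0,1]$, so $b^*$ is feasible.

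The second and main step is optimality. For any feasible $b$, convexity of the square gives $b^2 \ge b^{*2} + 2b^*(b-b^*)$ pointwise. It therefore suffices to show $\int b^*(b-b^*)\,dv \ge 0$, and I would prove this by a pointwise comparison with $\lambda(b-b^*)$:
\begin{itemize}
\item On $\{\absb{v} < \sqrt{4\alpha}\}$ we have $b^* = \lambda$, so $b^*(b-b^*) = \lambda(b-b^*)$.
\item On the complement we have $b^* = u \le \lambda$, and feasibility gives $b - b^* \le 0$, so $b^*(b-b^*) \ge \lambda(b-b^*)$.
\end{itemize}
Integrating yields $\int b^*(b-b^*) \ge \lambda \int (b-b^*) = 0$, because $b$ and $b^*$ satisfy the same linear constraint \eqref{thm.sub.4}. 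Hence $\Gamma(b) \ge \Gamma(b^*)$.

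Strict convexity of $t \mapsto t^2$ makes the inequality strict unless $b = b^*$ almost everywhere, so the minimizer is unique up to null sets. The only delicate point I expect is making this convexity step rigorous for arbitrary measurable $b \in \calb$ rather than appealing to the informal exchange argument used for Theorem \ref{thm.submarine.loss.1}. The pointwise inequality above sidesteps any differentiability assumption, so I would present it in that form.
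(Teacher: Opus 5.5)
Your proof is correct, and it takes a different route from the paper's.

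\textbf{The paper's route.} It argues through necessary conditions. It first restricts attention to $b$ possessing one-sided derivatives. It then shows a minimizer must be continuous, by averaging across a putative jump. Finally it uses a two-point exchange argument: if $b(v_1)$ and $b(v_2)$ both lie strictly inside $(0, 1-\absb{v}/2)$, equalizing them lowers $b(v_1)^2+b(v_2)^2$ while keeping their sum fixed. Hence all interior values of a minimizer coincide, and continuity plus the coverage constraint then force the stated form.

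\textbf{Your route.} You posit the water-filling form $b^*=\min(\lambda,1-\absb{v}/2)$ from the pointwise Lagrangian and fix $\lambda=1-\sqrt{\alpha}$ from the coverage constraint. You then verify global optimality using the supporting-line inequality $b^2\ge b^{*2}+2b^*(b-b^*)$ together with the sign argument $(b^*-\lambda)(b-b^*)\ge 0$. This is exactly KKT sufficiency with multiplier $\lambda$.

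\textbf{What each buys.} The paper's approach derives the shape of the solution rather than guessing it. However, it tacitly assumes a minimizer exists and relies on an unjustified regularity reduction. Its pointwise perturbations also only affect an integral if they are carried out on sets of positive measure. Your verification needs none of this. It applies to every measurable feasible $b$, which is automatically square-integrable since $0\le b\le 1$, and it establishes existence by exhibiting the minimizer. Through $\Gamma(b)-\Gamma(b^*)\ge\int (b-b^*)^2\,dv$ it also gives uniqueness up to null sets, which the paper does not address.
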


\begin{proof}
Since $\Gamma(b)$ is an integral we may confine attention to $b \in \calb$ which possess one-sided derivatives everywhere.  First, $b(v)$ must be continuous. To verify this, suppose for some $v^\pr$, we have $\lim_{\eps\downarrow 0} b(v^\pr+\eps) = b(v^\pr_+)  \neq b(v^\pr)$. Then  
$$
\liminf_{\eps\downarrow 0} b(v^\pr+\eps)^2 + b(v^\pr)^2 -  2 \left( \frac{b(v^\pr_+)+ b(v^\pr)}{2} \right)^2 > 0.
$$
This implies that $b(v)$ can be strictly improved by setting $b(v) = (b(v^\pr_+)+ b(v^\pr))/2$ in a small enough neighborhood of $v^\pr$. 

Then define the family of open intervals $I_v = (0, (1 - \absb{v}/2))$.  Suppose for $0 \leq v_1 < v_2$ we have $b(v_1) \in I_{v_1}$, $b(v_2) \in  I_{v_2}$. Suppose $b(v_1) + b(v_2) = q$. Then $\Gamma(b)$ cannot be minimized unless $b(v_1) = b(v_2)$. Since $b(v)$ is continuous, this implies that $\Gamma(b)$ is minimized by \eqref{eq.min.odds}. 
\end{proof}

For this loss, the minimum risk procedure is therefore the (SD) procedure of Table \ref{table.submarine.ci.compendium}, after applying the admissibility bound (Figure \ref{fig2-submarine} (b)). That this is quite different from the solution offered by Theorem \ref{thm.submarine.loss.1} illustrates an important point, that no one single method of inference can be regarded as uniformly best. Once we have identified which procedures are admissible, we cannot identify a best choice without the formulation of a decision problem.  As we see here, this aspect of the inference problem, that is, the careful selection of the loss function, can be the most important.


\end{document}